\newcommand{\kw}[1]{{\ensuremath {\mathsf{#1}}}\xspace}
\newcommand{\PriLLM}{\kw{PriLLM}}
\newcommand{\frameworkname}{\PriLLM}
\newcommand{\stitle}[1]{\noindent\textbf{#1}}
\newcommand{\etitle}[1]{\noindent\textit{#1}}
\newcommand{\mbi}{\begin{itemize}}
\newcommand{\mei}{\end{itemize}}
\newcommand{\State}{\STATE}
\newcommand{\For}{\FOR}
\newcommand{\EndFor}{\ENDFOR}
\newcommand{\If}{\IF}
\newcommand{\EndIf}{\ENDIF}
\newcommand{\Else}{\ELSE}
\def\f{{\bm f}}
\def\bb1{{\bf 1}}
\def\P{{\rm \bf P}}
\def\f{{\boldsymbol{f}}}
\def\F{{\mathcal F}}
\def\C{{\mathcal C}}
\def\P{{\mathcal P}}
\newtheorem {theorem}  {\textbf{Theorem}} 
\newtheorem {definition}  {\textbf{Definition}}
\newtheorem {problem} {\textbf{Problem}}
\newenvironment{sequation}{\begin{equation}}{\end{equation}}
\def\eg{{e.g.,}}
\def\ie{{i.e.,}}
\title{Pricing Online LLM Services with Data-Calibrated Stackelberg Routing Game}
\author{
  Zhendong Guo\textsuperscript{\rm }\protect,
  Wenchao Bai\textsuperscript{\rm },
  Jiahui Jin\textsuperscript{\rm }\thanks{Corresponding author}
}
\begin{document}

\maketitle

\begin{abstract}

The proliferation of Large Language Models (LLMs) has established LLM routing as a standard service delivery mechanism, where users select models based on cost, Quality of Service (QoS), among other things.
However, optimal pricing in LLM routing platforms requires precise modeling for dynamic service markets, and solving this problem in real time at scale is computationally intractable.
In this paper, we propose \PriLLM, a novel practical and scalable solution for real-time dynamic pricing in competitive LLM routing.
\PriLLM models the service market as a Stackelberg game, where providers set prices and users select services based on multiple criteria.
To capture real-world market dynamics, we incorporate both objective factors (\eg~cost, QoS) and subjective user preferences into the model.
For scalability, we employ a deep aggregation network to learn provider abstraction that preserve user-side equilibrium behavior across pricing strategies.
Moreover, \PriLLM offers interpretability by explaining its pricing decisions.
Empirical evaluation on real-world data shows that \PriLLM achieves over 95\% of the optimal profit while only requiring less than 5\% of the optimal solution's computation time.
\looseness=-1
\end{abstract}

\begin{links}
    \link{Code}{https://github.com/luck-seu/PriLLM}
\end{links}

\section{Introduction}
\label{secintro}

The landscape of Large Language Models (LLMs) is rapidly evolving, with service providers continuously introducing new models. 
This proliferation complicates the task of selecting the optimal model for users~\cite{song-etal-2025-irt, yue-etal-2025-masrouter}. 
To address this challenge, LLM routing platforms, such as OpenRouter, Eden AI, and Martian, have been developed.
These platforms provide a centralized interface that consolidates key metrics for each model, including per-token cost and Quality of Service (QoS) parameters. 
Additionally, they offer a unified API, enabling seamless access to multiple models through a standardized interface. 
{\it This paper investigates the dynamic service pricing strategies for a service provider within an LLM routing system (Figure \ref{fig:llm_routing_system})}, 
where user requests are dynamically routed to the most appropriate service provider based on user preferences. 

When selecting services, users' routing preferences are shaped by both objective and subjective factors. 
Objective attributes include per-token cost, model parameter count, 
and real-time quality-of-service metrics (QoS), namely, access delay (time to first token, TTFT) and service congestion (tokens generated per second, TPS). 
Subjective attributes encompass the user-perceived values derived from LLMs' practical utilities and their brand reputations \citep{song-etal-2025-irt,hu2024unveiling,mizrahi2024state}. 
Users aggregate the factors to minimize individual costs, allowing the routing system to allocate requests effectively across providers.
The pricing strategies in LLM service markets are crucial for both providers and users, yet remain an open problem.
On one hand, setting lower prices can attract more users, but may degrade QoS due to higher demand. On the other hand, higher prices may increase QoS but reduce market competitiveness.
Even worse, the service provider can only obtain partial information about the system
without knowing actual users' preferences and LLMs' user-perceived values.

\begin{figure}[] 
    \centering
    \includegraphics[width=\linewidth]{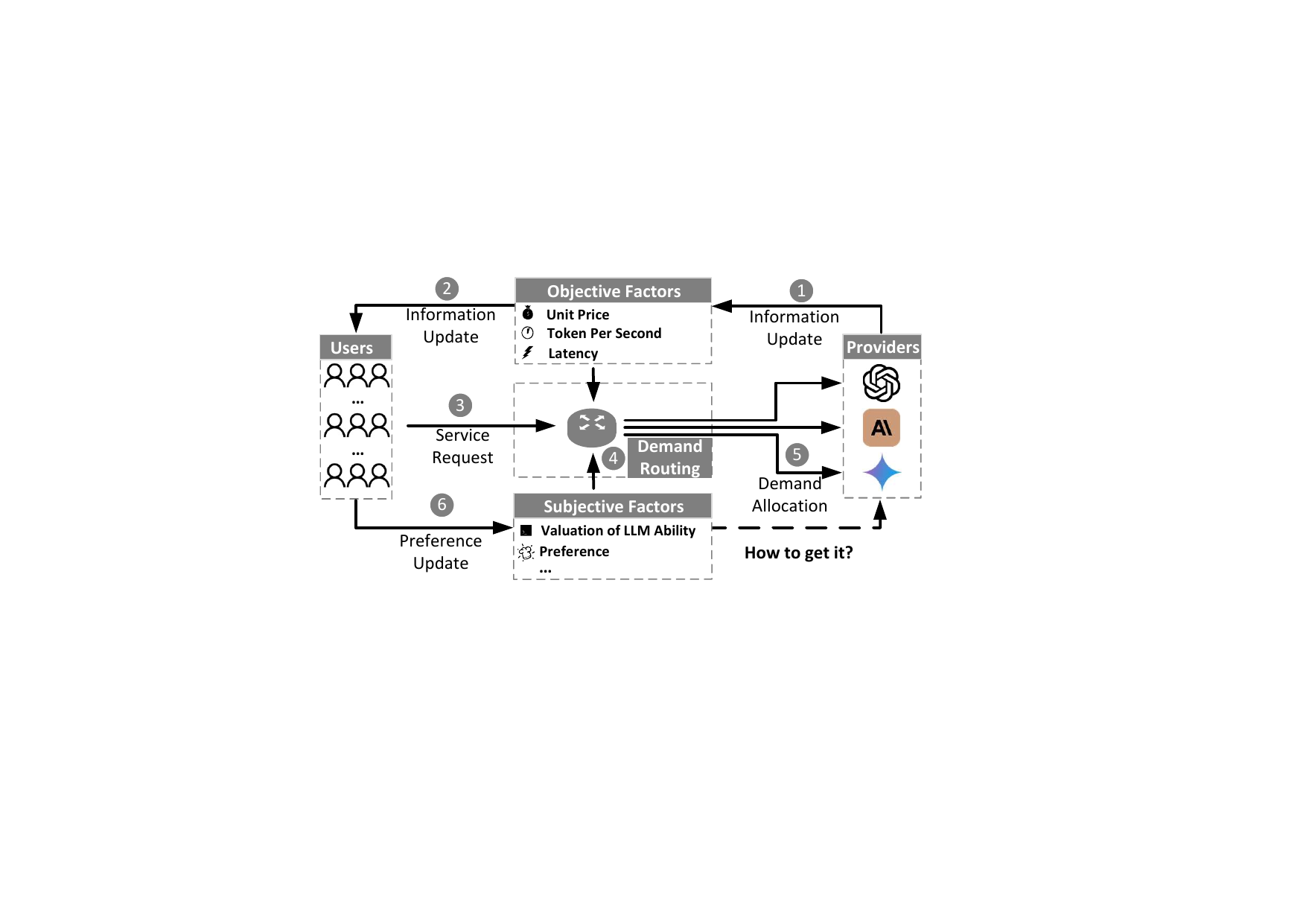} 
    \caption{
    An example of an LLM routing system that dynamically matches user requests to the most suitable providers based on individual preferences. 
    In this example, users receive updates on key metrics of service providers and then adjust their allocation strategies accordingly (Steps 1--4). 
    Their subjective preferences will be updated after providers accept and fulfill the requests (Steps 5--6).
    }
    \label{fig:llm_routing_system}
\end{figure}
Traditional approaches for cloud service pricing~\cite{saxena2024oversubscription, chakraborty2021dynamic, huang2023edgea, tutuncuoglu2024optimal,ding2023optimal} 
cannot be directly applied to LLM routing services due to a fundamental misalignment with the market's unique dynamics.
Specifically, LLM services exhibit heightened sensitivity to QoS factors such as service congestion and network latency, while user preferences are additionally shaped by user-perceived value.
As a counter-intuitive example, a provider's price increase may actually expand its market share if accompanied by substantial performance improvements.
Without a calibrated market model that jointly captures these system-level sensitivities and user-side heterogeneity, traditional methods fail to reflect real market dynamics.
The Stackelberg game framework naturally models leader-follower dynamics in service markets, where providers act as leaders setting prices and users respond as followers. 
However, applying this framework to LLM service pricing introduces substantial computational challenges.
In particular, the leader's optimization problem depends on the follower-side Nash equilibrium, which is often non-convex, leading to computational intractability.
Existing approaches mitigate this complexity through macro-level approximations~\cite{harks2021stackelberg,cui2020optimal,wu2020pool,fotouhi2021optimal,wu2021revenue,xiongoptimal} 
or equilibrium relaxations~\cite{li2019participation,naoum-sawaya2011controlled,bohnlein2021revenue,briest2008stackelberg}.
However, these simplifications sacrifice the modeling precision necessary to capture the nuanced dynamics inherent in LLM routing systems.
{\it Can we calibrate the market model to reflect real-world dynamics by incorporating both objective metrics and subjective preferences? 
If so, how can we characterize the resulting market equilibrium? 
Can we design scalable algorithms to handle many users and providers in real-time?} 
To the best of our knowledge, these questions remain unexplored.

\stitle{Contributions \& Organization.} 
To answer these questions, we propose \PriLLM, the first practical and scalable solution for real-time service pricing in LLM routing.

\etitle{(1) Problem formulation} (Section~\ref{secproblemstat}).
We formulate the service pricing problem as a Stackelberg game, where providers act as leaders and customers as followers. 
Given the initial pricing strategy profile $\mathcal{P}$, service selection strategy profile $\mathcal{F}$, user-defined routing functions, and a target provider $s$, 
it alternates between updating $\mathcal{F}$ to reach a Nash equilibrium under $\mathcal{P}$, and adjusting $s$'s pricing strategy to maximize its profit. 
This yields a mathematical program with equilibrium constraints (MPEC), which is usually NP-hard. 
We prove the existence and uniqueness of the Nash equilibrium.

\etitle{(2) Game calibration} (Section~\ref{seccalibration}).  
We calibrate our Stackelberg routing model to real LLM routing data by learning user‑preference parameters $\boldsymbol{\theta}$. 
At any given price and market condition, the user side admits a single Nash equilibrium flow, delivered by a predictor function $\mathcal{F}^{*}(\cdot)$. 
Fitting the game model to observed traffic allows us to recover latent preferences and align the model with real behaviors.

\etitle{(3) Game abstraction} (Section~\ref{secdeep_agg}).
To reduce complexity in large-scale markets, we aggregate both users and service providers into abstracted groups. 
On the user side, we group users by the apps they use, based on the assumption that users of the same app share similar preferences. 
On the provider side, we propose a deep aggregation network that learns to abstract service providers.
To prevent overfitting, \frameworkname{} samples multiple pricing strategies for $s$ and minimizes the discrepancy in selection strategies across them.

\etitle{(4) Experimental study} (Section~\ref{secnumeral_results}).
Using real-life market data from OpenRouter, we empirically find the following.
(a) \PriLLM accurately captures real-world market dynamics parameters, achieving a high $R^2$ score of 0.8982 in fitting user traffic flows. 
(b) \PriLLM outperforms pMPEC and Smooth, the best baselines, by 9.2\% and 14.3\% on average, respectively, up to 11.4\% and 17.3\%. 
(c) \PriLLM achieves over 95\% of the optimal profit, requiring less than 5\% of the computation time of the brute-force optimal solver.

\section{Related Work}
\label{secrelated_work}

\subsection{Stackelberg Game in Service Pricing}

In traditional cloud and edge computing markets, dynamic pricing has long been modeled as a Stackelberg game, with service providers acting as leaders and users acting as followers. 
However, these traditional models typically assume that users' utilities depend solely on coarse-grained quality of service (QoS) metrics~\cite{saxena2024oversubscription,chakraborty2021dynamic,tutuncuoglu2024optimal,ding2023optimal},
such as average latency, while ignoring users' subjective values of model quality.
In the LLM service domain, user decisions additionally weigh a range of fine-grained metrics, such as TTFT, TPS, and user-perceived values for different models~\cite{ding2024hybrid, hu2024routerbench, feng2024graphrouter}. 
This richer utility structure makes classical pricing models difficult to apply. 
To bridge this gap, our work introduces a Stackelberg congestion formulation explicitly parameterized by both objective and subjective preferences for the LLM servicing market.

\subsection{LLM Service Evaluation}
Existing evaluations of LLM services fall into two broad categories: 
(i) human-centric studies that elicit subjective quality judgements from small, controlled user panels \citep{chiang2024chatbot,shankar2024validates,ni2024mixeval}; 
and (ii) automated benchmarks that score models on curated reference corpora with ground-truth answers \citep{hu2024unveiling,mizrahi2024state}. 
Both streams produce leaderboards, yet neither is suitable for informing pricing decisions. 
Human-centric studies suffer from limited and non-representative samples, preventing the generalization of findings to the heterogeneous user base encountered in production systems. 
Automated benchmarks, conversely, rely on static test suites that may deviate from the task distributions encountered in live deployments; 
consequently, their reported scores exhibit low correlation with the utility users actually derive from the service. 
In contrast to these methods, we evaluate LLM performance through the lens of user-perceived value. 
This value is derived from the collective behavior of users in the LLM market and is primarily measured by the models' ability to enhance real-world task performance.

\subsection{Model Relaxation in MPEC}
A prevalent approach for investigating the optimal pricing strategy for LLM service providers within a congestion-aware Stackelberg game is to reformulate the task as an MPEC~\cite{naoum2011controlled, cardellini2016game}. 
Even a bilevel linear program is NP-hard~\cite{friesz1985properties}.
Existing research, therefore, either simplifies the problem itself~\cite{harks2021stackelberg,cui2020optimal,wu2020pool,fotouhi2021optimal,wu2021revenue,xiongoptimal} 
or relaxes the MPEC constraints~\cite{li2019participation,naoum-sawaya2011controlled,bohnlein2021revenue,briest2008stackelberg}, 
while some studies employ reinforcement learning to approximate these constraints so as to alleviate computational burden~\cite{liu2020multi, kuang2025accelerate}.
Nevertheless, users of LLM services are acutely sensitive to quality-of-service (QoS) factors such as latency and congestion. 
Consequently, model-simplification techniques that disregard heterogeneous user preferences or congestion effects are inadequate for this task~\cite{harks2021stackelberg}.
By contrast, methods that relax the MPEC constraints often require considerable computation time to attain high precision, 
whereas the complexity of the MPEC renders the underlying logic difficult for the RL network to master directly. 
Our approach is grounded in the analysis of the routing game's properties, which allows us to formulate a fully differentiable game abstraction which enables direct, end-to-end training of a neural network.

\section{Stackelberg Routing Game}
\label{secproblemstat}

\begin{figure*}[ht] 
    \centering
    \includegraphics[width=\linewidth]{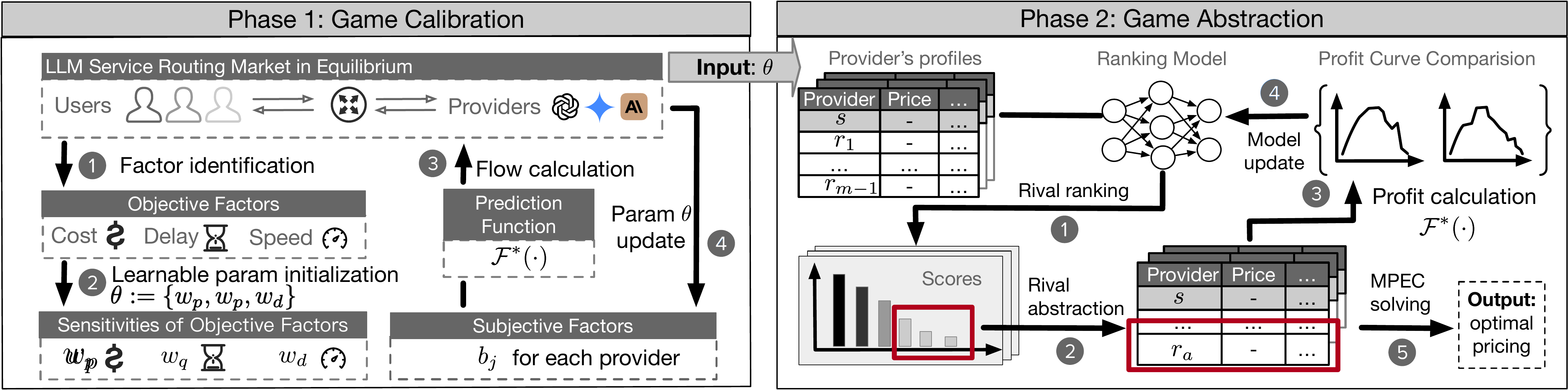} 
    \caption{Overview of \PriLLM.}
    \label{fig:arc}
\end{figure*}

We model the LLM routing system as a Stackelberg routing game. 
We begin with an overview of the game setting, followed by a formal problem definition.

\subsection{Preliminaries}

The Stackelberg routing game comprises two types of entities:  service providers and users.

\stitle{Service Providers.} The service providers develop, train, and maintain the LLMs, as well as host the LLM services.
We denote them as $\mathcal{S}$.
For each SP $s_j\in\mathcal{S}$, its service is characterized by 
user-perceived value $b_j$
, service capacity $\alpha_j$
and a certain QoS, including metrics: transmission delay between users and congestion factor.
The SP aims to set an optimal unit price $p_j$ (\eg~dollars per million tokens) to maximize its profit. 
All the SPs' prices compose a pricing strategy profile $\mathcal{P}$ of the system, such that $\mathcal{P}=\{p_j\}_{j=1}^m$.

We focus on the pricing decision problem for a (given) target provider. Thus, we divide the set of SPs into a target provider $s$ and the set of rival providers $R$, such that $\mathcal{S}=\{s\}\cup R$:
(i)~\textit{Target Provider}~($s$):~This is the specific provider whose pricing strategy is the central focus of our study. 
(ii)~\textit{Rival Providers} ($R$):~This is a set of $m-1$ competing service providers, denoted as $R = \{r_1, r_2, \dots, r_{m-1}\}$. Without loss of generality, we let $s_j\in\mathcal{S}$ be a target provider if $j=m$, otherwise $s_j$ is a rival provider.

\stitle{Users.} We consider a set of $n$ users, denoted as $U = \{u_1, u_2, \dots, u_n\}$. A user is a group of population and is aggregated as an entity, like a commercial application or an enterprise client with a total token demand $D_i$. User $u_i$ decides her allocation strategy $\f_i = \{f_{ij}\}_{j \in S}$, where $f_{ij}$ indicates the amount of tokens allocated to SP $s_j$, which is determined by  minimizing her routing-cost function $\mathcal{C}_i$:
\begin{sequation}
\label{eq:game_user}
\begin{array}{ll}
&\mathcal{C}_i  = \displaystyle \sum_{j\in S}   f_{ij}\left( w_p p_j + w_{q}  Q_j + w_{d} d_{ij}  - b_{j} \right)\\
\displaystyle & \textrm{s.t.}  \displaystyle \sum_{j\in S} f_{ij} = D_i, \quad f_{ij} \geq 0. \\
\end{array}
\end{sequation}

\noindent
where $p_j$, $d_{ij}$, and $Q_j$ represent objective factors: unit price, transmission delay between $s_j$ and $u_i$, and $s_j$'s congestion factor, respectively.
Subjective preferences are captured by $b_j$ (user-perceived value of $s_j$'s quality/brand) and $w_p, w_q, w_d$ (subjective weights of the objective factors).

\subsection{Game Formulation}
\label{sec:gameformualtion}
We use the Stackelberg game to model the pricing decision process of $s$.
In this game, $s$ as the leader sets the price $p_s$ according to the market conditions of $U$ and $R$, and then the users as followers make their allocations based on unit price, TTFT and congestion factor of each service provider. We assume $Q_j = \sum_{i \in U}f_{ij}/\alpha_j$ in our model, where $\alpha_j$ is the service capacity of $s_j$. We can get the following definitions.
The presence of service congestion means that users' selections are interdependent, as one user's choice can impact the service quality experienced by others. This interaction among users forms a non-cooperative game. Therefore, $s$'s price decision-making process is also constrained by the Nash Equilibrium (NE) of this user-side game.

\color{black}
\stitle{User-side Game} Given the fixed price strategy profile $\P$ of all service providers, users strategically split their demands to minimize their expected acquisition cost.  
Let the strategy profile of \(u_i\) be \(\f_i = \{f_{ij}\}_{j \in R \cup \{s\}}\). Denote the joint allocation strategy profile of all \(n\) users by \(\F = (\f_1, \f_2, \dots, \f_n)\). For convenience, write \(\F_{-i} = (\f_1, \dots, \f_{i-1}, \f_{i+1}, \dots, \f_n)\).
Next, we define  NE of the user-side game and prove the existence and uniqueness.

\begin{definition} [Nash Equilibrium of the user-side game]
\label{de:user_eq}
A strategy profile $\F=(\f^*_i, \f^*_{-i})$ is a Nash equilibrium of the user-side game if for any user $u_i$, it is true that $\C_i(\f^*_i; \P, \f^*_{-i}) \leq \C_i(\f'_i; \P, \f^*_{-i})$ for any $\f'_i \neq \f^*_i$.
\end{definition}

\begin{theorem}
\label{thm:ne}
A unique NE exists in the user-side game.
\end{theorem}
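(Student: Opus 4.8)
The plan is to construct an exact potential function $\Phi$ for the user-side game, show that its constrained minimizers coincide exactly with the Nash equilibria, and then argue that $\Phi$ is strictly convex on a nonempty compact convex domain; existence \emph{and} uniqueness of the NE then follow at once. I will work under the standing modelling assumptions $w_q>0$ and $\alpha_j>0$ for all $j$ (without congestion sensitivity the game decouples into per-user linear programs whose optima need not be unique, so this hypothesis is genuinely needed).

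First I would rewrite each $\C_i$ in the separated form $\C_i(\f_i;\P,\F_{-i}) = \sum_{j\in S} a_{ij}f_{ij} + \sum_{j\in S}\frac{w_q}{\alpha_j}\, f_{ij}\big(\textstyle\sum_{k\in U} f_{kj}\big)$, where $a_{ij} = w_p p_j + w_d d_{ij} - b_j$ gathers all flow-independent terms and the only coupling among users enters through $Q_j=(\sum_k f_{kj})/\alpha_j$. For fixed $\F_{-i}$ the map $\f_i\mapsto\C_i$ is convex (its $\f_i$-dependent part is $\sum_j \frac{w_q}{\alpha_j}f_{ij}^2$ plus an affine term), and each feasible set $X_i=\{\f_i\ge\bzero:\sum_{j} f_{ij}=D_i\}$ is a nonempty compact convex scaled simplex. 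Hence, by convex optimality (KKT), $\f_i^*$ is a best response to $\F^*_{-i}$ iff there is a multiplier $\mu_i$ with $\partial_{f_{ij}}\C_i = \mu_i$ whenever $f_{ij}^*>0$ and $\partial_{f_{ij}}\C_i \ge \mu_i$ whenever $f_{ij}^*=0$, where $\partial_{f_{ij}}\C_i = a_{ij} + \frac{w_q}{\alpha_j}\big(\sum_k f_{kj} + f_{ij}\big)$.

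Next I would introduce $\Phi(\F) = \sum_{i}\sum_{j} a_{ij}f_{ij} + \sum_{j}\frac{w_q}{\alpha_j}\Big(\tfrac12\big(\textstyle\sum_k f_{kj}\big)^2 + \tfrac12\textstyle\sum_i f_{ij}^2\Big)$ and verify the key identity $\partial_{f_{ij}}\Phi = \partial_{f_{ij}}\C_i$ for all $i,j$; this holds precisely because the congestion term is \emph{bilinear} in $f_{ij}$ and the aggregate flow and the congestion itself is \emph{linear}, which is what makes the relevant cross-derivatives symmetric (so $\Phi$ is in fact an exact potential). Consequently the KKT system of $\min_{\F\in\prod_i X_i}\Phi(\F)$ over the product domain decouples across users and is literally the collection of the per-user best-response conditions above; since $\Phi$ is continuous and $\prod_i X_i$ is nonempty and compact, a minimizer exists and it is a NE, while conversely any NE is a KKT point of this convex program and hence a global minimizer of $\Phi$.

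Finally, uniqueness follows once $\Phi$ is shown to be strictly convex: its Hessian is block diagonal across providers, with the block for provider $j$ equal to $\frac{w_q}{\alpha_j}\big(I + \bb1\bb1^\top\big)$, which is positive definite since $I$ is positive definite, $\bb1\bb1^\top$ is positive semidefinite, and $w_q/\alpha_j>0$; a strictly convex function has a unique minimizer over the convex set $\prod_i X_i$, so there is exactly one NE flow profile. I expect the main obstacle to be the potential identity together with the claim that the KKT conditions of $\min\Phi$ over the \emph{constrained product} domain reproduce the per-user first-order conditions exactly rather than some weaker surrogate --- here one invokes that the normal cone of a product set is the product of the normal cones. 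As a fallback, uniqueness also follows from Rosen's diagonal-strict-concavity criterion: writing $\delta_{ij}=f_{ij}-f'_{ij}$, a direct computation gives $\sum_{i,j}\big(\partial_{f_{ij}}\C_i(\F)-\partial_{f_{ij}}\C_i(\F')\big)\delta_{ij} = \sum_{j}\frac{w_q}{\alpha_j}\big((\sum_i\delta_{ij})^2 + \sum_i\delta_{ij}^2\big) > 0$ whenever $\F\neq\F'$, so the pseudo-gradient is strictly monotone and the equilibrium is unique.
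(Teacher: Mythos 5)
Your proof is correct, but it takes a genuinely different route from the paper. The paper's proof verifies that each user's feasible set is a compact convex simplex and that each $\C_i$ is strictly convex in the user's own strategy $\f_i$ (diagonal Hessian with entries $2w_q/\alpha_j>0$), and then appeals to Rosen's theorem for both existence and uniqueness. You instead exhibit the exact potential $\Phi$, show $\partial_{f_{ij}}\Phi=\partial_{f_{ij}}\C_i$, prove $\Phi$ is strictly convex jointly in $\F$ (per-provider Hessian blocks $\tfrac{w_q}{\alpha_j}(I+\bb1\bb1^\top)\succ 0$), and identify the NE set with the unique constrained minimizer. Two remarks on the comparison. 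First, your route is actually more airtight on the uniqueness claim: Rosen's uniqueness theorem requires diagonal strict concavity of the (weighted) pseudo-gradient, which is a \emph{joint} condition and does not follow merely from each $\C_i$ being strictly convex in $\f_i$ alone; the paper's proof does not verify it, whereas your fallback computation $\sum_{i,j}(\partial_{f_{ij}}\C_i(\F)-\partial_{f_{ij}}\C_i(\F'))\delta_{ij}=\sum_j\tfrac{w_q}{\alpha_j}\bigl((\sum_i\delta_{ij})^2+\sum_i\delta_{ij}^2\bigr)>0$ is precisely the missing strict-monotonicity check, and the potential argument sidesteps Rosen entirely. Second, the potential function you construct is the same one the paper introduces only \emph{later} (its $\Phi_{\text{Fixed}}+\Phi_{\text{Congestion}}$ used to compute the NE and to prove differentiability of $\F^*(\boldsymbol{\theta})$); by proving Theorem~\ref{thm:ne} via that potential you get the computational characterization (NE $=$ unique minimizer of a strictly convex QP) for free, at the cost of having to justify the potential identity and the product-normal-cone/KKT decoupling step, both of which you handle correctly. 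Your observation that $w_q>0$ is genuinely needed (otherwise the per-user problems are linear programs with possibly non-unique optima) is also a point the paper states only implicitly.
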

\begin{proof}[Proof Sketch]
The objective function of $u_i$ (\ie Eq.~\eqref{eq:game_user}) is continuous, and the inequality and equality constraints are convex. Therefore, the feasible sets of Eq.~\eqref{eq:game_user} are closed, nonempty, and convex. The Hessian matrix of the cost function $\C_{i}$ is positive definite, which means that the second-order partial derivatives are greater than zero, \ie, $\nabla^{2} \C_{i} \succ 0$ and the cost function $\C_{i}$ is strictly convex. Thus, the followers form a concave $n$-person game. By Rosen's uniqueness theorem~\cite{r6}, the lower-layer game always has a unique equilibrium, and each follower's optimization problem can converge to a unique solution in the equilibrium state, regardless of the pricing strategy $\P$.
\end{proof}

\stitle{Stackelberg Routing Game} 
Based on user-side NE, we define the target $s$'s pricing problem, which maximizes its profit $\Psi_s$ by determining the best unit price $p_s$ as below. 

\begin{problem}[LLM Service Pricing Problem] 
\label{pb:optimal_price}
Given the user-side NE, $s$ determines $p_s$ to maximize its profit $\Psi_s$.
\begin{sequation}
\begin{array}{cl}
\displaystyle \max_{p_s} &  \displaystyle \Psi_s(p_s) = p_s \cdot  Q^*_s(p_s) \cdot \alpha_s \\
\textrm{s.t.} & {\mbox{User-side game reaches equilibrium}}\\
& 0 \leq p_s \leq p^{max},
\end{array}
\end{sequation}

\noindent 
where $Q_s^{*}(p_s)\alpha_s$ is the equilibrium service demand for $s$ resulting from user-side NE. 
\end{problem}

The Stackelberg routing game (Problem~\ref{pb:optimal_price}) is intractable: its leader–follower structure, coupled with the user-side Nash equilibrium, yields an MPEC that remains NP-hard even in the linear case \cite{friesz1985properties}. Moreover, the user cost function $\mathcal{C}_i$ in Equation~\ref{eq:game_user} contains preference parameters $\boldsymbol{\theta} = \{w_p, w_q, w_d, \{b_j\}\}$ that encode latent sensitivities; unless these parameters are calibrated on real market data, the model suffers a pronounced sim-to-real gap and delivers pricing policies of inferior quality.

Accordingly, we introduce \frameworkname{}, a two-phase framework. An overview of \frameworkname{} is provided in Figure~\ref{fig:arc}: Phase~1 refines the game model by extracting latent user preferences from observational data; Phase~2 scales the market efficiently and compresses the pricing problem to a simplified form for live LLM providers.

\section{Data-Driven  Game  Calibration }
\label{seccalibration}
To calibrate the game model with the data from the real-life LLM routing platforms, we propose a paradigm shift from traditional model-first approaches to a data-driven framework. Instead of manually setting the user preference parameters $\theta$, we learn them directly from routing data. 
We consolidate these learnable parameters into a single set $\boldsymbol{\theta}$ to simplify notation, after normalizing the price weight to $w_p=1$ for model identifiability:
$\boldsymbol{\theta} = \{1,w_q, w_d, \{b_j\}_{j \in S}\}.$
We demonstrate the learnability of the parameter set $\boldsymbol{\theta}$.

To achieve game calibration, Phase 1 of \PriLLM contains four key steps. First, \PriLLM identifies objective factors from the real-world data and generates a strong initial estimate for $\theta$. 
Subsequently, \PriLLM leverages a prediction function to compute the expected routing flows under the current parameters $\boldsymbol{\theta}$. The discrepancy between these predicted flows and the actual routing data is then quantified as an error signal, which is backpropagated to refine $\boldsymbol{\theta}$. 
This refinement process is repeated until we obtain the final parameters that best explain the real-world LLM routing flows.

\subsection{Learnability of Game Parameters}
\label{secpredicte}

Theorem~\ref{thm:ne} implies that for any given $S$ and its pricing strategy $\P$, the user-side game NE is a single, routing flow $\F^*$. 
This allows us to treat the user-side game as a function $\F^*(\cdot)$ mapping  objective factors of each $s_j$, $\boldsymbol{O}=\{p_j, \alpha_j, \{d_{ij}\}_{i\in U}\}_{j \in S}$, demnad of users, $\{D_i\}_{i \in U}$ and user preference parameters, $\boldsymbol{\theta}$ to a unique equilibrium flow. 
\begin{sequation}
    \F^* = \F^{*}(\boldsymbol{\theta},\boldsymbol{O},\{D_i\}_{i \in U})
\end{sequation}

\noindent
To achieve $\F^*(\cdot)$, we construct a potential function $\Phi(\F)$ for the game. To simplify its presentation, we define it as the sum of two components: a fixed cost component $\Phi_{\text{Fixed}}(\F)$ and a congestion cost component $\Phi_{\text{Congestion}}(\F)$.
Let $\Phi_{\text{Fixed}}(\F)$ be defined as:
\begin{equation}
\label{eq:potential_fixed}
\Phi_{\text{Fixed}}(\F) = \sum_{i=1}^n \sum_{j \in S} \left( w_p p_j + w_d d_{ij} - b_j \right) f_{ij}
\end{equation}
And let $\Phi_{\text{Congestion}}(\F)$ be the total delay cost:
\begin{sequation}
\label{eq:potential_congestion}
\Phi_{\text{Congestion}}(\F) = \sum_{j \in S} \frac{w_q}{2\alpha_j} \left( (Q_j\alpha_j)^2 + \sum_{i=1}^n f_{ij}^2 \right)
\end{sequation}

\noindent
The potential function is the sum of these two parts: $\Phi(\F) = \Phi_{\text{Fixed}}(\F) + \Phi_{\text{Congestion}}(\F)$, where $\F^*$ is the uniqueness solution to minimization of $\Phi(\F;\boldsymbol{\theta}, \boldsymbol{O}, \{D_i\}_{i \in U})$.
A proof is provided in the Section~\ref{sec:ne_calculation}.
To learn $\boldsymbol{\theta}$ via gradient-based methods, we also need  compute the gradient of a loss function through $\F^{*}(\cdot)$. 
\begin{theorem}
\label{thm:optimizable}
The user-side NE prediction function $\F^*(\boldsymbol{\theta})$ is piecewise differentiable. Consequently, for any loss function $\mathcal{L}(\F^*)$, the gradient $\nabla_{\boldsymbol{\theta}} \mathcal{L}$ exists (as a sub-gradient at non-differentiable points) and can be learned end-to-end using modern automatic differentiation frameworks.
\end{theorem}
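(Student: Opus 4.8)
The plan is to recognize $\F^{*}(\boldsymbol{\theta})$ as the solution map of a parametric strongly convex quadratic program and to apply classical sensitivity (parametric-optimization) analysis. Recall from the construction preceding the statement that $\F^{*}(\boldsymbol{\theta})$ is the unique minimizer of the potential $\Phi(\F;\boldsymbol{\theta},\boldsymbol{O},\{D_i\}_{i\in U})$ over the feasible polytope $\mathcal{K}=\{\F : \sum_{j\in S} f_{ij}=D_i,\ f_{ij}\ge 0\ \forall i,j\}$, which is nonempty, compact and convex. The objective is quadratic in $\F$; its Hessian is block-diagonal with blocks $\tfrac{w_q}{\alpha_j}(\bb1\bb1^{\top}+I)$, hence positive definite whenever $w_q>0$ (this is exactly the fact underlying Theorem~\ref{thm:ne}), and $\Phi$ is jointly $C^{\infty}$ in $(\F,\boldsymbol{\theta})$ since $\boldsymbol{\theta}$ enters only through polynomial coefficients. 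So $\Phi$ is uniformly strongly convex in $\F$ on compact subsets of the admissible set $\Theta=\{w_q>0,\dots\}$, which already gives, by standard arguments for parametric QPs, that $\boldsymbol{\theta}\mapsto\F^{*}(\boldsymbol{\theta})$ is single-valued and locally Lipschitz on $\Theta$.

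Next I would expose the piecewise structure through the KKT system. Attaching multipliers $\bm{\lambda}$ to the demand equalities $\sum_{j\in S} f_{ij}=D_i$ and $\bm{\mu}\ge\bzero$ to the constraints $f_{ij}\ge 0$, the necessary-and-sufficient optimality conditions are $\nabla_{\F}\Phi(\F;\boldsymbol{\theta})+A^{\top}\bm{\lambda}-\bm{\mu}=\bzero$, primal feasibility, $\bm{\mu}\ge\bzero$, and complementarity $\mu_{ij}f_{ij}=0$. For a fixed active set $\mathcal{A}=\{(i,j):f^{*}_{ij}=0\}$ these collapse to a square linear system in $(\F,\bm{\lambda},\bm{\mu}_{\mathcal{A}})$ whose coefficient matrix is the KKT matrix of the face of $\mathcal{K}$ cut out by $\mathcal{A}$. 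That matrix is nonsingular because (i) the reduced Hessian on the face is positive definite (strong convexity), and (ii) linear independence constraint qualification holds on every face of $\mathcal{K}$ — the gradients of the active nonnegativity constraints are distinct coordinate vectors, and for each user the single demand equality is independent of them since not all of that user's coordinates can be active when $D_i>0$. I would then partition $\Theta$ into the finitely many (at most $2^{n m}$) regions $\Theta_{\mathcal{A}}$ on which the optimal active set equals $\mathcal{A}$ and strict complementarity holds; on the interior of each such region the implicit function theorem applied to the KKT identities shows $\F^{*}(\cdot)$ is real-analytic (indeed a rational function of $\boldsymbol{\theta}$, a ratio of polynomials with nonvanishing denominator), and differentiating the KKT identities and back-solving yields the closed form for $\nabla_{\boldsymbol{\theta}}\F^{*}$.

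Finally I would glue the pieces. By the local Lipschitzness established above, the analytic branches defined on the $\Theta_{\mathcal{A}}$ agree on their common boundaries, so $\F^{*}$ is a globally continuous, piecewise-analytic (in particular piecewise differentiable) selection; its Clarke generalized gradient at a boundary point is the convex hull of the limiting Jacobians of the adjacent branches, which is nonempty, and composing with a smooth loss gives $\partial(\mathcal{L}\circ\F^{*})$ nonempty with $J^{\top}\nabla\mathcal{L}(\F^{*})$ a valid member for any limiting Jacobian $J$. This is precisely what a modern autodiff framework returns: whether $\F^{*}$ is produced by solving the active-set linear system or by unrolling a strongly convex solver, back-propagation evaluates one branch Jacobian and hence an element of the Clarke subdifferential (equal to the true gradient almost everywhere), so $\nabla_{\boldsymbol{\theta}}\mathcal{L}$ exists in this sense and end-to-end training is well posed. \emph{The main obstacle} is the boundary analysis: proving the solution does not jump across active-set changes (so ``piecewise differentiable'' is genuinely global, not merely differentiability on an open dense set), and verifying the constraint qualification and second-order conditions that make the implicit function theorem applicable on each face. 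Both follow from the strong convexity of $\Phi$ recorded in Theorem~\ref{thm:ne} together with the per-user simplex structure of $\mathcal{K}$, but they must be checked carefully to rule out degenerate faces, loss of rank, and failure of strict complementarity on the measure-zero boundary set.
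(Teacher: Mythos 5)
Your proposal is correct and follows essentially the same route as the paper's proof: characterize $\F^{*}(\boldsymbol{\theta})$ as the solution of a parametric strictly convex QP, differentiate the KKT system via the implicit function theorem on each stable active set (using positive definiteness of the Hessian and LICQ), relegate active-set transitions to a measure-zero boundary set handled by sub-gradients, and conclude via the chain rule that autodiff yields valid descent directions. Your version is in fact slightly more careful than the paper's at the boundaries — the explicit block Hessian of the potential, the local-Lipschitz gluing of the analytic branches, and the Clarke-subdifferential formulation — but these are refinements of, not departures from, the same argument.
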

\begin{proof}[Proof Sketch]
The unique NE is the solution to a strictly convex quadratic program (QP) derived from the user-side game(a potential game), with $\boldsymbol{\theta}$ appearing as coefficients in the $\C_i$. The solution $\F^*$ is characterized by the Karush-Kuhn-Tucker (KKT) conditions, which form an implicit function of $\boldsymbol{\theta}$. Automatic differentiation libraries can differentiate through the solution of such convex optimization problems. The piecewise nature, which arises from changes in the set of active constraints (i.e., users' service routing), is handled by these frameworks, which compute valid sub-gradients at points of non-differentiability.
\end{proof}
\subsection{Initialize and Learn Game Parameters}
\label{gameparainit}
The learning process is sensitive to the initial values of $\boldsymbol{\theta}$, as poor initialization often leads to a failure to converge. Hence, we propose an initialization strategy to get the high quality initial values $\{b_j\}$ for $\boldsymbol{\theta}$ .
The method takes representative days of real-world traffic data ($\F^{\text{real}}_t$) as input, and outputs a initial parameters $\boldsymbol{\theta}_{\text{init}}$ , detailed in Section~\ref{sec:ne_calculation}.

The core idea is to assume the observed data $\F^{\text{real}}_t$ already represents a user NE. Based on this assumption, we work backward to find the inherent biases $\{b_j\}$ of the services with fixed weights $\{w_p=1,w_d=1,w_q=1\}$. In a NE, for any user, all the services they actually use must be equally costly or attractive, and these must be more attractive than any service they do not use. This principle allows us to formulate a simple Linear Program to find the smallest non-negative biases $\{b_j^*\}$ that make the real-world data calibrated with our game model. We then form our initial parameter vector:$\{w_p=1.0, w_d=1.0,w_p=1.0, \{b_j^*\}\}$.

Given initial parameter vector and $T$ days' real-world routing data as $T$ NE of user-side game, \ie, the real daily traffic distributions, objective factors of $S$, demand of each user, our objective is to find the parameters $\boldsymbol{\theta}^*$ that best account for the observed data. 
Hence we minimize a loss function that quantifies the discrepancy between the model’s predicted NE and the actual NE:

\begin{sequation}
\label{eq:inverse_opt}
\min_{\boldsymbol{\theta}} \quad \mathcal{L}(\boldsymbol{\theta}) = \sum_{t=1}^{T} \left\| \F^{*}(\boldsymbol{\theta}; \{D_{it}\}_{i \in U}, \mathbf{O}_t) - \F^{\text{real}}_t \right\|_2^2
\end{sequation}
\noindent
where $\F^{\text{real}}_t,\{D_{it}\}_{i \in U}$ is the routing data and demand data of day $t$.
By minimizing the objective function $\mathcal{L}(\boldsymbol{\theta})$, we can find a local optima using gradient-based methods.

\section{Dynamic Pricing with Game Abstraction}
\label{secdeep_agg}
\label{secstage2}
\newcommand{\modelparams}{\phi}
\newcommand{\rivalset}{R}
\newcommand{\simprrivalset}{\hat{R}_{\text{simple}}}
\newcommand{\target}{s}
\newcommand{\users}{U}
\newcommand{\candprices}{P_{\text{cand}}}
\newcommand{\predprofitvec}{\hat{\mathbf{Y}}}
\newcommand{\trueprofitvec}{\mathbf{Y}}

To tackle the Stackelberg routing game, \PriLLM~uses a novel learning framework to simplify the routing market while preserving the user-side NE for the target providers $s$. 
Within the simplified market, we can efficiently solve the MPEC problem to find a near-optimal price for target $s$.

As illustrated in Phase 2 of Fig~\ref{fig:arc}, the process is as follows. First, \PriLLM~employs a ranking model to assign a score to each rival and aggregates the rivals with lower scores.
\PriLLM~then assesses the quality of the abstracted routing game by comparing $s$'s profit curve to that of the original game, and updates the ranking model accordingly.
By using routing game abstraction, we can efficiently solve for a near-optimal pricing to the $s$'s routing game.
 
\subsection{Rival Ranking and Abstraction}
The main idea of \PriLLM's abstraction is to preserve the most influential $K-1$ rivals while aggregating the less significant ones. It can maintain the complexity of the market competition while reducing the computational load. 

To rank the rival providers, \PriLLM encodes each rival $r_j \in R$ into a feature vector encapsulating its objective factors, user-perceived value and related factor with $s$. These embeddings are processed by a stack of Set Attention Blocks (SABs)~\cite{pmlr-v97-lee19d}.
To accommodate different aggregation needs, our SAB module is designed to compute two parallel sets of importance scores for each rival:
$$    (score_j^{\text{sum}}, score_j^{\text{avg}}) = SAB(p_j,\{d_{ij}\},\alpha_j,b_j;\boldsymbol{\theta},\boldsymbol{O},\{D_i\})$$
\PriLLM preserves the top $K-1$ rivals identified by the averaging score. All remaining rivals $R_{\text{agg}}$ are then aggregated into a single representative rival, $r_a$. The attributes of $r_a$ are synthesized using the two learned score types: cumulative properties like $\alpha_a$ are calculated via a weighted sum:
$$\alpha_a = \sum_{r_j \in R_{\text{agg}}} score_j^{\text{sum}} \cdot \alpha_j$$
while attributes like $p_a,\{d_{ia}\},b_a$ are computed through a weighted average:
$$
(p_a,\{d_{ia}\},b_a) = \frac{\sum_{r_j \in R_{\text{agg}}} (score_j^{\text{avg}} \cdot (p_j,\{d_{ij}\},b_j))}{\sum_{r_j \in R_{\text{agg}}} score_j^{\text{avg}}}.
$$
This process transforms the original market into a simplified market by reducing the size of the rival set.
\subsection{Loss Function and Model Solving}

\begin{table*}[t]
\centering
\begin{threeparttable}
\centering
\begin{tabular}{l rr rr rr rr} 
    \toprule
    \textbf{Scenario} & \multicolumn{2}{c}{\textbf{Premium Market}} & \multicolumn{2}{c}{\textbf{Economy Market}} & \multicolumn{2}{c}{\textbf{Coding Market}} & \multicolumn{2}{c}{\textbf{Translation Market}} \\ 
    \cmidrule(lr){1-1} \cmidrule(lr){2-3} \cmidrule(lr){4-5} \cmidrule(lr){6-7} \cmidrule(lr){8-9} 
    \textbf{Problem Size} & \multicolumn{2}{c}{(4 LLMs, 900B tokens)} & \multicolumn{2}{c}{(8 LLMs, 1200B tokens)} & \multicolumn{2}{c}{(13 LLMs, 800B tokens)} & \multicolumn{2}{c}{(10 LLMs, 40B tokens)} \\ 
    \textbf{Metric} & \textbf{Time(s)} & \textbf{Profit(\%)} & \textbf{Time(s)} & \textbf{Profit(\%)} & \textbf{Time(s)} & \textbf{Profit(\%)} & \textbf{Time(s)} & \textbf{Profit(\%)} \\ 
    \midrule
    \textbf{\frameworkname{}}& \textbf{1.04} & \textbf{98.5\%} & \textbf{3.01} & \textbf{96.2\%} & \textbf{4.95} & \textbf{95.1\%} & \textbf{3.98} & \textbf{95.8\%} \\
    \midrule
    \kw{pMPEC}      & 1.64  & 91.9\% & 21.02 & 88.3\% & 95.04 & 85.4\% & 35.17 & 87.5\% \\
    \kw{Smooth}      & 3.29  & 88.1\% & 135.2 & 84.5\% & 680.5 & 81.1\% & 210.8 & 83.8\% \\
    \kw{SPGCE}      & 2.24  & 89.2\% & 18.04  & 85.1\% & 105.22 & 80.5\% & 32.66 & 64.2\% \\
    \kw{ODCA}      & 2.37  & 62.8\% & 28.15  & 51.3\% & 101.17 & 64.7\% & 39.49 & 78.6\% \\
    \bottomrule
\end{tabular}
\end{threeparttable}
\caption{Performance comparison of pricing algorithms with optimal profit across different market scenarios.}
\label{tab:main_results_profit}
\end{table*}

Our central hypothesis is that if the profit curve of target $s$ in an abstracted routing game aligns with the original profit curve, the abstracted routing game's derived optimal price will approximate the true optimum. 
Therefore, we evaluate the quality of an abstracted game by comparing the two profit curves of target $s$.
We employ a sampling-based method. By sampling multiple price points for target $s$, we repeatedly compute and compare its profit by  $\F^*(\cdot)$ in both the original and the abstracted routing game. 

Given the orginal profit vector $\mathbf{Y}$ and the predicted profit vector $\hat{\mathbf{Y}}(\phi)$ over a set of sampled prices points, we train \frameworkname{}, parameterized by $\phi$, by minimizing:
\begin{equation}
\mathcal{L}{\text{curve}}(\phi) = \frac{1}{L} \sum_{k=1}^{L} \left( \frac{Y_k}{||\mathbf{Y}||_{\infty}} - \frac{\hat{Y}k(\phi)}{||\mathbf{Y}||{\infty}} \right)^2,
\end{equation}

\noindent
where $\phi$ represents the parameters of our deep aggregation network.
$X$ is the total number of candidate price points sampled for the target provider $s$.
$\mathbf{Y} = [Y_1, \dots, Y_L]$ is the profit vector in which each $Y_k$ is the profit for provider $s$ at the $k$-th candidate price, computed in the original routing game.
But $\hat{\mathbf{Y}}(\phi) = [\hat{Y}_1(\phi), \dots, \hat{Y}_L(\phi)]$ is computed in the simplified game generated by \PriLLM's aggregator.
$||\mathbf{Y}||_{\infty}$ is the L-infinity norm (i.e., the maximum absolute value) of the original profit vector. We use it to normalize both curves, which stabilizes the training process against varying scales of profit. Detail calculation of NE is provided in Section~\ref{sec:ne_calculation}. 

After abstraction, the problem can be formulated as an MPEC via the KKT conditions. We can split the price domain of $p_s$ into ordered intervals, solve the resulting sub-problems, and keep the best price; As every feasible price is examined, the solution is optimal. All methods are given in Baselines of Sec 6.

\section{Evaluation}
\label{secnumeral_results}
Using real-life datasets, this section experimentally evaluate the effectiveness and efficiency of \frameworkname{} framework.

\vspace{5pt}
\noindent
{\bf Dataset.} 
We collected three months of historical data for the 20 most popular LLM services and 20 APPs from OpenRouter. 
We train \frameworkname{}'s deep aggregation network using a dataset of over 2,000 market scenarios constructed through data augmentation. Each scenario is derived from a real daily market snapshot, with perturbations applied to the attributes of rivals to simulate diverse market conditions.

\noindent
{\bf Configurations.} 
From this dataset, one provider is randomly selected as the target provider, $s$, with the rest forming the set of rivals, $R$. 
We extract the API input unit prices as $p_j$ and total daily token usage as the total user demands. To simulate a market of enterprise clients, we model the total usage flow of one commercial APP as one user. Key QoS metrics are simulated based on empirical data: user-specific TTFTs are sampled from the observed distribution on OpenRouter, and service capacities $\alpha_j=\sum_{i \in U}f_{ij} / v_j$ where $v_j$ are the models' official TPS ratings.
We report profit as a relative metric, defined as the ratio to the optimal profit.

\noindent\textbf{Enviroment settings.}
We ran experiments on a 64-bit machine with an Intel i7-8550U CPU and 24GB RAM. Our framework is implemented in Python 3.8. The MPEC problems are modeled using Pyomo 6.1.2. The learned parameters of \frameworkname{} are based on a portion of the historical data, and evaluation is performed on a held-out test set.

\begin{figure*}[htbp]
  \centering
  \includegraphics[width=\textwidth]{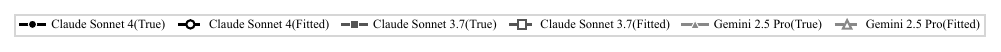}
  
  \centering
      \begin{subfigure}[b]{0.28\textwidth}
        \includegraphics[width=\textwidth]{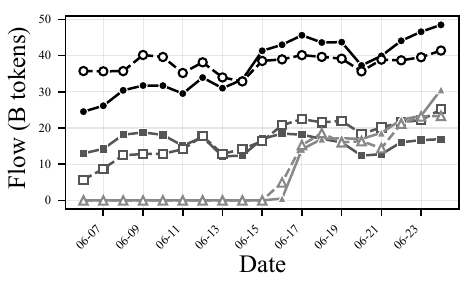} 
        \caption{Calibration of LLM Routing Flows.}
        \label{fig:model_flow}
      \end{subfigure}
      \hfill 
    \centering
    \begin{subfigure}[b]{0.17\textwidth}
      \includegraphics[width=\textwidth]{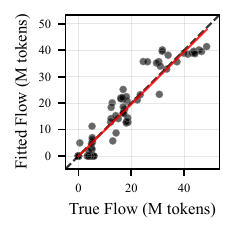} 
      \caption{Fit by \frameworkname{}. }
      \label{fig:figa}
    \end{subfigure}
    \hfill
    \begin{subfigure}[b]{0.17\textwidth}
      \includegraphics[width=\textwidth]{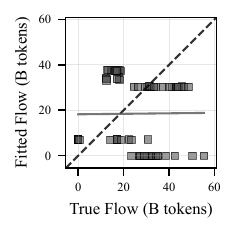} 
      \caption{Fit by \kw{XGBoost}.}
      \label{fig:figb}
    \end{subfigure}
    \hfill
    \begin{subfigure}[b]{0.17\textwidth}
      \includegraphics[width=\textwidth]{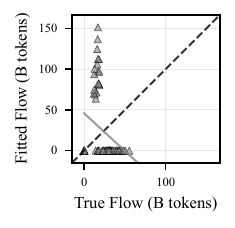} 
      \caption{Fit by \kw{NPM}.}
      \label{fig:figc}
    \end{subfigure}
    \hfill
    \begin{subfigure}[b]{0.17\textwidth}
      \includegraphics[width=\textwidth]{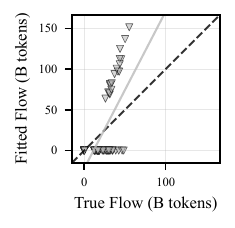} 
      \caption{Fit by $\PriLLM_{\kw{noQ}}.$}
      \label{fig:figd}
    \end{subfigure}
    \caption{Experiment of \frameworkname{}'s Game Calibration. } 
    \label{fig:group}
  \label{fig:overall}
\end{figure*}

\noindent\textbf{Baselines.}
We use the following MPEC solvers to solve the Stackelberg routing game.
(1) {\kw{SPITER}}~\cite{JIN2024110405}, the default solver integrated in \frameworkname{}.
(2) {\kw{BF} (Brute-Force)}, which enumerats all KKT conditions of the user-side game. It provides the theoretical optimum.
(3) \kw{pMPEC}~\cite{hart2017pyomo}, a generic solvers implemented via the Pyomo library.
(4) \kw{Smooth}~\cite{wu2021revenue}, which relaxes the complementarity constraints into smooth inequalities, allowing the use of standard nonlinear solvers.
(5) \kw{ODCA}~\cite{chen2020stackelberg} \&  \kw{SPGCE}~\cite{harks2021stackelberg}, a simplified MPEC solver by ignoring user interaction effects.

For market simulation, we include the following baselines: 
(6) \kw{NPM}, which models the user cost function solely based on observable, objective metrics.
(7) \kw{XGBoost}~\cite{chen2016xgboost}, a non-game-theoretic approach that directly predicts user choices from market features.

We also compare with baselines for subjective preference learning:
(8) \kw{FD}, a black-box optimization method that approximates the gradients of the user equilibrium through finite differences.  
(9) \kw{A2C}~\cite{liu2020multi}, an actor-critic reinforcement learning method that learns an optimal policy for selecting preference parameters.

\noindent
\subsection{Overall Performance of \frameworkname{}}
\label{secbehavior_calibration}

We conduct experiments on \frameworkname{} to validate the game calibration and the game abstraction method. To evaluate \frameworkname{} under diverse competitive conditions, we formulated the LLM market into four typical scenarios: Premium and Economy markets are delineated by API price points (e.g., above/below \$1 per million tokens), while Coding and Translation markets are formulated based on application-specific usage data. This partitioning varies problem sizes and competitive dynamics, as summarized in Table~\ref{tab:main_results_profit}.

\paragraph{Effectiveness of game calibration.} 
We tested \frameworkname{} on the coding market which includes 13 popular LLM service providers and 8 coding apps with token usage exceeding 2 B tokens. 
Fig.~\ref{fig:model_flow} shows that \frameworkname{} achieves good fitting performance in both the 2-LLM Model and 3-LLM Model coding market scenarios based on the learned parameters. 
Figure~\ref{fig:figa} show that our model achieves a high $R^2$ score of 0.8982 and a low Mean Absolute Error (MAE) of 3.56M tokens. Comparing with Fig.~\ref{fig:figb}, we conclude that \frameworkname{} has stronger prior knowledge than traditional machine learning methods and can learn effective information when relevant data is sparse. Comparing with Fig.~\ref{fig:figc} and Fig.~\ref{fig:figd}, we see that the $b_j$ and congestion effect $Q_j$ considered in the modeling of \frameworkname{} improve the 
ability of calibrating.

\paragraph{Effectiveness of game abstraction.} We validate \frameworkname{}'s ability to simplify the Stackelberg routing game. 
We use the deep aggregation network to get a simplified market with $K=2$ rivals, for which $K=2$ is a good choice in terms of solution efficiency and aggregation accuracy. Then We compare \frameworkname{}'s efficiency and accuracy with different methods on the various market settings. The results are mainly represented in Table~\ref{tab:main_results_profit}. Comparing \frameworkname{} with \kw{pMPEC} and \kw{Smooth} methods, we can find that according to the aggregation results obtained by \frameworkname{}, the \kw{SPITER} method can solve the problem in a faster time and obtain profit close to the optimal solution. 

\noindent
\subsection{Impacts of Key Components}
\paragraph{Impact of game-parameter learning method.}
We study the impact of the game-parameter learning methods on game calibration.
{First, we collected data from the four most popular LLMs for 19 consecutive days, including price, TTFT, TPS, and usage tokens from various apps. We then used  this data to learn user subjective parameters in our game and then calibrate them against the real market. Parameters learned using the \kw{A2C} or \kw{FD} methods consistently failed to effectively calibrate the model to real-world data. Due to the complexity of the underlying user game, \kw{A2C} methods struggle to learn. \kw{FD} methods treat the underlying user solution process as a black box, requiring a significant time-consuming gradient calculation and manual adjustment of the update step size
}
Table~\ref{tab:ablation_study} shows that only our method can stably learn parameters. On the dataset, the MSE error can be reduced by 45.73 within 25.88 seconds.

\begin{table}[h]
\centering
\begin{tabular}{l c r r}
\toprule
\textbf{Method} & \textbf{Converges?} & \textbf{Final MSE} & \textbf{Time (s)} \\
\midrule
\textbf{\frameworkname{}} & \textbf{Yes} & \textbf{45.73} & \textbf{25.88} \\
\kw{FD}       & No & $1.63 \times 10^6$  & --- \\
\kw{A2C}              & No & $3.14 \times 10^{11}$ & --- \\
\bottomrule
\end{tabular}
\caption{Impact of Game-parameter Learning Methods.}
\label{tab:ablation_study}
\end{table}
\paragraph{Impact of aggregation methods.} We evaluate the impacts of aggregation methods on game abstraction. 
First, we selected data from eight LLMs in the Economy market as the baseline experimental setup, including unit price, TTFT, TPS, and usage tokens from various apps. Second, we varied the number of LLMs from small to large (5 and 7 rival providers). Finally, based on the resulting markets of varying sizes, we simplified the market using different aggregation methods and uniformly calculated the optimal price using the \kw{SPITER} algorithm. In Table~\ref{tabablation_da_net}, we compared the profits obtained using different aggregation methods. DA$_{K=2}$ 
We also compared the time taken to calculate the price using the \kw{BF} method directly without using any aggregation method.
In Table~\ref{tabablation_da_net}, we can see that \frameworkname{} achieves near-optimal profits (over 97\% of the optimum) while drastically reducing computation time compared to the exact solver.

\begin{table}[ht]
\centering
\caption{
Impact of Aggregation Methods. \kw{BF} is the brute-force method without aggregation. \kw{DA} stands for the deep aggregation network. \kw{MIN} selects the $K$ cheapest rivals. \kw{AVG} returns $K$ identical providers with averaged attributes.
}
\label{tabablation_da_net}
\begin{tabular}{l rr rr}
    \toprule
    & \multicolumn{2}{c}{\textbf{6 LLMs Market  }} & \multicolumn{2}{c}{\textbf{8 LLMs Market}} \\
    \cmidrule(lr){2-3} \cmidrule(lr){4-5}
    \textbf{Method} & \textbf{Profit(\%)} & \textbf{Time(s)} & \textbf{Profit(\%)} & \textbf{Time(s)} \\
    \midrule
    \kw{BF} & 100.0\% & 420.63 & 100.0\% & 612.24 \\
    \midrule
    \kw{DA}$_{K=1}$ & 93.8\% & 5.52 & 92.1\% & 9.85 \\
    \kw{DA}$_{K=2}$ & 95.2\% & 9.18 & 94.5\% & 13.24 \\
    \kw{DA}$_{K=3}$ & 99.7\% & 19.17 & 99.5\% & 20.21 \\
    \kw{DA}$_{K=4}$ & 99.9\% & 62.65 & 99.9\% & 61.22 \\
    \midrule
    \kw{MIN}$_{K=2}$ & 58.1\% & 6.37 & 51.7\% & 14.09 \\
    \kw{AVG}      & 18.9\% & 5.15 & 15.3\% & 9.98 \\
    \bottomrule
\end{tabular}
\end{table}
\paragraph{Impact of Parameter $K$.} We also analyze the impact of the number of aggregated rivals $K$ on the model's effectiveness and efficiency. 
We conduct this experiment on the 8 LLM market settings and compare with heuristic method. We vary $K$ from 1 to 4 for \frameworkname{} and measure the resulting profit and total computation time. Table~\ref{tabablation_da_net} shows that increasing $K$ from 1 to 4 reduces the profit gap from 11.53\% to a near-zero 0.06\%. Our experiment shows that using $K=2$ aggregated rivals offers a profit improvement over $K=1$ for a increase in runtime.

\section{Conclusion}
\label{secconclusion}
We introduced \PriLLM, a framework for dynamic pricing in LLM service markets. By formulating the problem as a Stackelberg game and leveraging novel data-driven calibration and a deep aggregation network, \PriLLM overcomes the limitations of traditional approaches. Our evaluation on real-world data confirms that \PriLLM achieves near-optimal profit with efficiency. As future work, we will expand \frameworkname{} from single leader pricing to multiple leaders.

\section{Acknowledgments}
This work is supported by National Natural Science Foundation of China under Grants No. 62572119 and 62232004,  Jiangsu Provincial Key Laboratory of Network and Information Security under Grants No.BM2003201, Key Laboratory of Computer Network and Information Integration of Ministry of Education of China under Grants No.93K-9, and partially supported by Collaborative Innovation Center of Novel Software Technology and Industrialization, Collaborative Innovation Center of Wireless Communications Technology. We also thank the Big Data Computing Center of Southeast University for providing the experiment environment and computing facility.

\bibliography{aaai2026}

\clearpage
\section{Appendix}
\label{sec:appendix}
\begin{table}[ht] 
    \centering 
    \caption{Key Notations in the Stackelberg Routing Game.} 
    \label{tab:notations} 
    \begin{tabular}{ll}
        \toprule 
        $\mathcal{S}$ & Set of all service providers (SPs). \\
        $U$ & Set of all users. \\
        $s$ & The target SP. \\
        $R$ & Set of rival SPs. \\
        $s_j, u_i$ & The $j$-th SP and the $i$-th user, respectively. \\
        $m, n$ & Total number of SPs and users, respectively. \\
        $b_j$ & User-perceived value for SP $s_j$. \\
        $\alpha_j$ & Service capacity of SP $s_j$. \\
        $D_i$ & Total token demand of user $u_i$. \\
        $d_{ij}$ & Transmission delay between $u_i$ and $s_j$. \\
        $\boldsymbol{\theta}$ & Vector of all user preference parameters. \\
        $p^{\max}$ & Maximum allowable price for the target SP. \\
        $p_j$ & Unit price set by SP $s_j$. \\
        $\mathcal{P}$ & Price profile of all SPs in the system. \\
        $f_{ij}$ & Amount of tokens allocated from $u_i$ to $s_j$. \\
        $\mathbf{f}_i$ & Allocation strategy vector for user $u_i$. \\
        $\mathbf{F}$ & Joint allocation strategy profile of all users. \\
        $\mathcal{C}_i$ & Cost function for user $u_i$. \\
        $\Psi_s$ & Profit function for the target SP $s$. \\
        $Q_j$ & Congestion factor of SP $s_j$. \\ 
        $\Phi(\mathbf{F})$ & Exact potential function for the user-side game. \\
        $\mathbf{F}^*$ & Nash Equilibrium (NE) of the user-side game. \\ 
        $\mathbf{F}^{\text{real}}$ & Observed real-world flow allocation from data. \\
        \bottomrule 
    \end{tabular}
\end{table}

\subsection{Data Acquisition and Parameter Calibration}
\label{sec:data_acquisition}

We source our dataset from OpenRouter. For example, the dataset of programming market comprises 13 LLM service providers ($s_j \in \mathcal{S}$) and 11 major applications ($u_i \in U$). For each service provider $s_j$, we collect its unit price $p_j$, average latency, and generation speed in tokens per second (TPS), which we denote as $v_j$. The total weekly token usage for a provider is denoted as $T_j$. We then estimate its service capacity as $\alpha_j = T_j / v_j$. The traffic flow from each application $u_i$ to each provider $s_j$, representing the observed flow $f_{ij, t}^{\text{real}}$, is extracted from the top public apps this week using this model section on each provider's details page. To ensure data robustness, we filter out any application whose usage constitutes less than 1\% of the top application's volume for a given model. The transmission delay $d_{ij}$ is approximated using the average service latency reported by OpenRouter. Table~\ref{tab:app_usage_dataset} and~\ref{tab:model_performance_dataset} summarize the structure and key fields of our datasets.

We adopt the core assumption that the observed weekly routing flow constitutes a user-side Nash Equilibrium. With this equilibrium data, and by leveraging the established piecewise differentiability of the equilibrium mapping and the learnability proof in Theorem~\ref{thm:optimizable}, we can calibrate the model parameters. We employ a gradient-based optimization method to learn the user preference weights $w_q, w_d$ and the provider-specific perceived values $\{b_j\}$. 

\begin{table}[H]
\centering
\caption{Description of the APP usage dataset. The data spans from April 13, 2025, to July 23, 2025, covering the top 20 most active APPs' interactions with the top 20 LLMs.}
\label{tab:app_usage_dataset}
\begin{tabular}{@{}p{0.40\linewidth}p{0.64\linewidth}@{}}
\toprule
\textbf{Field Name} & \textbf{Description} \\
\midrule
Date & The date of data recording in YYYY-MM-DD format. \\
app\_name & The unique identifier or name of the application making the API call. \\
model\_name & The specific language model being called by the application. \\
model\_usage\_token & The total number of tokens consumed by this app for this model (raw count). \\
output\_speed & The average generation speed experienced by this app, in tokens per second (tokens/s). \\
time\_to\_first\_token & The average time to first token experienced by this app, in seconds (s). \\
\bottomrule
\end{tabular}
\end{table}

\begin{table}[H]
\centering
\caption{Description of the LLM performance dataset. The data spans from July 5, 2025, to July 23, 2025, and covers the top 20 most-used LLMs.}
\label{tab:model_performance_dataset}
\begin{tabular}{@{}p{0.40\linewidth}p{0.64\linewidth}@{}}
\toprule
\textbf{Field Name} & \textbf{Description} \\
\midrule
Date & The date of data recording in YYYY-MM-DD format. \\
model\_name & The unique identifier or name of the language model. \\
total\_token\_usage\_M & The total token consumption for the model on a given day, in millions. \\
output\_speed & The average generation speed of the model, measured in tokens per second (tokens/s). \\
time\_to\_first\_token & The average time to generate the first token, measured in seconds (s). \\
\bottomrule
\end{tabular}
\end{table}

\subsection{User-Side Equilibrium Existence and Uniqueness}
\label{useq}

We provide detailed proof for Theorem \ref{thm:ne} in Section~\ref{sec:gameformualtion}.

\begin{proof}
We prove the existence and uniqueness of the Nash Equilibrium by verifying the conditions of Rosen's theorem for the uniqueness of equilibrium points in $n$-person games \cite{r6}. Our user-side game is a cost-minimization game, which is equivalent to a payoff-maximization game where the payoff function is the negative of the cost function.

First, we analyze the strategy space for each user $u_i \in U$. A user's strategy $\f_i = \{f_{ij}\}_{j \in \mathcal{S}}$ is constrained by $\sum_{j \in \mathcal{S}} f_{ij} = D_i$ and $f_{ij} \geq 0$ for all $j \in \mathcal{S}$. This feasible strategy set for user $u_i$ is a standard simplex, which is a non-empty, compact, and convex subset of $\mathbb{R}^{|\mathcal{S}|}$.

Second, we examine the user's cost function $\mathcal{C}_i$ from Eq.~\eqref{eq:game_user}. This function is continuous in the joint strategy profile $\F$. For a fixed strategy profile $\f_{-i}$ of other users, we demonstrate that $\mathcal{C}_i$ is a strictly convex function of user $u_i$'s own strategy, $\f_i$.

The Hessian matrix of $\mathcal{C}_i$ with respect to the variables in $\f_i$ is a diagonal matrix. Its diagonal elements are given by:
\begin{equation}
\frac{\partial^2 \mathcal{C}_i}{\partial f_{ij}^2} = \frac{2w_q}{\alpha_j}
\end{equation}
and all off-diagonal elements are zero. Given that the weight $w_q > 0$ and service capacity $\alpha_j > 0$, these diagonal elements are strictly positive. Consequently, the Hessian matrix $\nabla^2_{\f_i} \mathcal{C}_i$ is positive definite. This proves that for any fixed strategies of other users $\f_{-i}$, the cost function $\mathcal{C}_i$ is strictly convex with respect to $\f_i$.

A game in which each player minimizes a strictly convex function over a compact, convex set is known as a convex game. This is equivalent to a concave game where each player maximizes a strictly concave payoff function $(-\mathcal{C}_i)$. Such a game satisfies the conditions of Rosen's theorem, which guarantees the existence and uniqueness of a Nash equilibrium. Therefore, the user-side game admits a unique equilibrium strategy profile $\F^*$, regardless of the providers' pricing strategy profile $\mathcal{P}$.
\end{proof}

\subsection{Users' NE Calculation}
\label{sec:ne_calculation}

We provide calculation method of user-side NE for Section~\ref{secpredicte}.

The existence and uniqueness of the Nash Equilibrium (NE) for the user-side game are established by demonstrating that it is a convex game, which is equivalent to a potential game. The NE strategy profile $\F^*$ can be computed by finding the unique minimizer of an exact potential function $\Phi(\F)$ over the feasible set of user strategies. This transforms the multi-agent equilibrium problem into a single, tractable convex optimization problem.

First, we define the exact potential function $\Phi(\F)$ for the user-side game, as introduced in the proof. The function is composed of two parts: a fixed cost component and a congestion cost component. Let $\Phi_{\text{Fixed}}(\F)$ be defined as:
\begin{equation}
\label{eq:potential_fixed_calc}
\Phi_{\text{Fixed}}(\F) = \sum_{i=1}^n \sum_{j \in \mathcal{S}} \left( w_p p_j + w_d d_{ij} - b_j \right) f_{ij}
\end{equation}
And let $\Phi_{\text{Congestion}}(\F)$ be the congestion-related component:
\begin{sequation}
\label{eq:potential_congestion}
\Phi_{\text{Congestion}}(\F) = \sum_{j \in S} \frac{w_q}{2\alpha_j} \left( (Q_j\alpha_j)^2 + \sum_{i=1}^n f_{ij}^2 \right)
\end{sequation}
The complete potential function $\Phi(\F)$ is the sum of these two parts:
\begin{equation}
\label{eq:atomic_potential_combined_calc}
\Phi(\F) = \Phi_{\text{Fixed}}(\F) + \Phi_{\text{Congestion}}(\F)
\end{equation}
It can be verified that the partial derivative of this potential function with respect to a user's flow, $\frac{\partial \Phi(\F)}{\partial f_{ij}}$, is equal to:
\begin{align}
\frac{\partial \Phi(\F)}{\partial f_{ij}} = \frac{\partial \C}{\partial f_{ij}}
\end{align}
The equilibrium is found by minimizing $\Phi(\F)$.

The unique NE strategy profile $\F^*$ is the solution to the following convex optimization problem:
\begin{sequation}
\label{eq:ne_optimization_problem}
\begin{array}{ll}
\displaystyle \min_{\F} & \Phi(\F) \\
\textrm{s.t.} & \displaystyle \sum_{j \in \mathcal{S}} f_{ij} = D_i, \quad \forall i \in U, \\
& f_{ij} \geq 0, \quad \forall i \in U, j \in \mathcal{S}. \\
\end{array}
\end{sequation}

\noindent
Since the objective function $\Phi(\F)$ is strictly convex and the feasible strategy space is a non-empty, compact, and convex set, a unique solution $\F^*$ exists and corresponds to the NE of the user-side game.

The solution to this optimization problem is characterized by the Karush-Kuhn-Tucker (KKT) conditions. Let $\lambda_i$ be the Lagrange multiplier for the demand constraint of user $u_i$, and $\mu_{ij}$ be the multiplier for the non-negativity constraint on $f_{ij}$. The KKT conditions for the equilibrium $\F^*$ are:
\begin{itemize}
    \item \textbf{Stationarity:} For every user $u_i \in U$ and provider $s_j \in \mathcal{S}$:
    \begin{equation}
        \frac{\partial \Phi(\F^*)}{\partial f_{ij}} + \lambda_i - \mu_{ij} = 0
    \end{equation}
    \item \textbf{Primal Feasibility:}
    \begin{equation}
        \sum_{j \in \mathcal{S}} f^*_{ij} = D_i, \quad f^*_{ij} \geq 0
    \end{equation}
    \item \textbf{Dual Feasibility:}
    \begin{equation}
        \mu_{ij} \geq 0
    \end{equation}
    \item \textbf{Complementary Slackness:}
    \begin{equation}
        \mu_{ij} f^*_{ij} = 0
    \end{equation}
\end{itemize}

These conditions provide the logic for the equilibrium allocation. From the complementary slackness condition, if user $u_i$ allocates a positive flow to provider $s_j$ (i.e., $f^*_{ij} > 0$), then $\mu_{ij}$ must be zero. The stationarity condition then simplifies to $\frac{\partial \Phi(\F^*)}{\partial f_{ij}} = -\lambda_i$. This implies that for any given user $u_i$, the marginal potential cost must be equal for all service providers $s_j$ to which she allocates positive demand. For any provider $s_k$ not used by $u_i$ (i.e., $f^*_{ik}=0$), we have $\mu_{ik} \geq 0$, which implies $\frac{\partial \Phi(\F^*)}{\partial f_{ik}} \geq -\lambda_i$.

In summary, at equilibrium, each user distributes their demand $D_i$ among the service providers such that the marginal costs on all chosen routes are equal, and this marginal cost is less than or equal to the marginal cost on any unchosen route. This is the classic Wardrop's first principle for user equilibrium, and problem~\eqref{eq:ne_optimization_problem} provides a direct method for its computation.

\subsection{Learnability of Game Parameters}

We provide detailed proof for Theorem~\ref{thm:optimizable} in Section~\ref{secpredicte}.

\begin{proof}

The user-side game admits a unique Nash Equilibrium $\F^*$, which is the solution to the minimization of an exact potential function $\Phi(\F)$. We express this dependency on the model parameters $\boldsymbol{\theta}$ by writing the equilibrium as a function $\F^*(\boldsymbol{\theta})$. This function is the unique solution to the following strictly convex quadratic program (QP):
\begin{equation}
\label{eq:potential_minimization_theta}
\F^*(\boldsymbol{\theta}) = \arg\min_{\F \in \mathcal{F}} \Phi(\F; \boldsymbol{\theta}),
\end{equation}
where $\mathcal{F} = \{ \F \mid \forall u_i \in U, \sum_{j \in \mathcal{S}} f_{ij}=D_i, f_{ij} \geq 0 \}$ is the convex and compact set of feasible flow allocations.

The potential function $\Phi(\F; \boldsymbol{\theta})$ for this atomic-splittable congestion game is a polynomial in the variables $f_{ij}$ and the parameters $w_p, w_q, w_d, b_j$. Therefore, $\Phi(\F; \boldsymbol{\theta})$ is twice continuously differentiable ($C^2$) with respect to both $\F$ and $\boldsymbol{\theta}$.

The unique solution to the QP in Eq.~\eqref{eq:potential_minimization_theta} is characterized by its KKT conditions. Let $\bm{\lambda} \in \mathbb{R}^n$ be the vector of Lagrange multipliers for the demand equality constraints ($\sum_j f_{ij} = D_i$) and $\bm{\mu} \in \mathbb{R}^{n \times |\mathcal{S}|}$ be the multipliers for the non-negativity constraints ($f_{ij} \ge 0$). The KKT system is:
\begin{subequations}
\label{eq:kkt_system_theta}
\begin{align}
\nabla_{\F} \Phi(\F; \boldsymbol{\theta}) + A^\top \bm{\lambda} + \bm{\mu} &= \bm{0} \label{eq:kkt_stationarity_theta} \\
A\F - \mathbf{D} &= \bm{0} \label{eq:kkt_eq_feasibility_theta} \\
f_{ij} \ge 0, \quad \mu_{ij} \ge 0, \quad \mu_{ij} f_{ij} &= 0 \quad &\forall i \in U, j \in \mathcal{S} \label{eq:kkt_complementarity_theta}
\end{align}
\end{subequations}
where $A$ is the matrix representing the linear equality constraints and $\mathbf{D}$ is the vector of demands $\{D_i\}_{i=1}^n$.

We invoke the Implicit Function Theorem on the system~\eqref{eq:kkt_system_theta} to show that the solution $(\F^*, \bm{\lambda}^*, \bm{\mu}^*)$ is a differentiable function of $\boldsymbol{\theta}$. The theorem requires the Jacobian of the system with respect to the primal-dual variables $(\F, \bm{\lambda}, \bm{\mu})$ to be non-singular at the solution.

Consider a solution point $(\F^*, \bm{\lambda}^*, \bm{\mu}^*)$ that is non-degenerate, meaning it satisfies strict complementarity: for each $(i, j)$, either $f^*_{ij} > 0$ and $\mu^*_{ij} = 0$, or $f^*_{ij} = 0$ and $\mu^*_{ij} > 0$. At such points, the active constraint set is stable under small perturbations of $\boldsymbol{\theta}$. The Jacobian of the active part of the KKT system with respect to $(\F, \bm{\lambda})$ is the KKT matrix:
\[
\mathcal{J} = 
\begin{bmatrix}
\nabla^2_{\F\F} \Phi(\F^*; \boldsymbol{\theta}) & A^\top \\
A & \bm{0}
\end{bmatrix}
\]
The proof for the uniqueness of the NE established that the potential function $\Phi(\F)$ is strictly convex. This implies that its Hessian, $\nabla^2_{\F\F} \Phi$, is positive definite. The constraints defined by matrix $A$ are linear and satisfy the Linear Independence Constraint Qualification (LICQ). For a strictly convex program under LICQ, the KKT matrix $\mathcal{J}$ is non-singular.

Since the functions defining the KKT system~\eqref{eq:kkt_system_theta} are continuously differentiable in both the variables and the parameters $\boldsymbol{\theta}$, and the Jacobian $\mathcal{J}$ is invertible, the Implicit Function Theorem applies. It guarantees that the solution map $(\F^*, \bm{\lambda}^*, \bm{\mu}^*)$ is a locally unique and continuously differentiable function of the parameters $\boldsymbol{\theta}$ in a neighborhood of any non-degenerate point.

The points in the parameter space of $\boldsymbol{\theta}$ where differentiability might fail are the degenerate points where the active constraint set changes (i.e., some $f^*_{ij}$ or $\mu^*_{ij}$ transitions to or from zero). These points form a set of measure zero. Therefore, we conclude that the equilibrium mapping $\F^*(\boldsymbol{\theta})$ is piecewise differentiable with respect to the parameters $w_p, w_q, w_d,$ and $\{b_j\}$.
Now, we have established that the unique Nash Equilibrium allocation, $\F^*(\boldsymbol{\theta})$, is a piecewise differentiable function of the model parameters $\boldsymbol{\theta}$. To compute the gradient of the loss, we apply the chain rule:
\begin{equation}
\nabla_{\boldsymbol{\theta}} \mathcal{L}(\F^*(\boldsymbol{\theta})) = \left(\frac{\partial \F^*(\boldsymbol{\theta})}{\partial \boldsymbol{\theta}}\right)^\top \nabla_{\F^*} \mathcal{L}(\F^*)
\end{equation}
where $\frac{\partial \F^*(\boldsymbol{\theta})}{\partial \boldsymbol{\theta}}$ is the Jacobian of the equilibrium mapping and $\nabla_{\F^*} \mathcal{L}$ is the gradient of the loss with respect to the allocation.

We analyze the two components of this product:
\mbi
    \item Gradient of the Loss $\nabla_{\F^*} \mathcal{L}$: The loss function $\mathcal{L}$ is a design choice, typically selected to be a differentiable function such as Mean Squared Error. Therefore, its gradient with respect to its input $\F^*$ is well-defined and readily computable.

    \item Jacobian of the Equilibrium Mapping $\frac{\partial \F^*(\boldsymbol{\theta})}{\partial \boldsymbol{\theta}}$: As proven, the mapping $\boldsymbol{\theta} \mapsto \F^*(\boldsymbol{\theta})$ is differentiable everywhere except on a set of measure zero. This set corresponds to parameter values where the active constraint set of the underlying QP (from Eq.~\eqref{eq:potential_minimization_theta}) changes. At all points of differentiability, the Jacobian exists.
\mei

At the points of non-differentiability, the standard gradient does not exist. However, because the equilibrium is the solution to a convex optimization problem, the function $\F^*(\boldsymbol{\theta})$ is continuous, and the loss function $\mathcal{L}(\F^*(\boldsymbol{\theta}))$ admits sub-gradients. For the purpose of stochastic gradient-based optimization, a sub-gradient provides a valid descent direction.

Modern automatic differentiation frameworks are equipped to handle such scenarios. They can perform implicit differentiation by leveraging the KKT conditions (Eq.~\eqref{eq:kkt_system_theta}) that implicitly define $\F^*$ as a function of $\boldsymbol{\theta}$. These frameworks can differentiate through the solution of the convex QP. When they encounter a point of non-differentiability, they return a valid sub-gradient (e.g., a one-sided derivative), which is sufficient for optimization algorithms like SGD or Adam to converge.

Therefore, since the gradient (or a sub-gradient) of the loss function $\mathcal{L}$ with respect to $\boldsymbol{\theta}$ can be computed for all values of $\boldsymbol{\theta}$, the parameters are learnable. We can effectively train the model end-to-end by backpropagating through the equilibrium-finding process to update $\boldsymbol{\theta}$.
\end{proof}

\subsection{Initialization of Subjective Preferences $\{b_j\}$}
\label{sec:bias_initialization}

We provide the initialization method of subjective preferences $\{b_j\}$ in Section~\ref{gameparainit}.

The KKT conditions for the equilibrium $\F^*$ imply that for any user $u_i$, the marginal potential cost must be equal for all services $s_j$ to which she allocates positive flow ($f^*_{ij} > 0$). This equilibrium marginal cost must be less than or equal to the marginal potential cost for any service $s_k$ that she does not use ($f^*_{ik} = 0$). Let us denote the equilibrium marginal cost for user $u_i$ by an auxiliary variable $\nu_i$. To initialize the search for $\{b_j\}$, we fix the weight parameters to baseline values, e.g., $w_p=1.0$, $w_q=1.0$, and $w_d=1.0$. Let $M'_{ij,t}$ denote the observable component of the marginal cost for user $i$ on service $j$ at day $t$, calculated using the real data equalibrium $\F^{\text{real}}_t$:
\begin{equation}
\label{eq:marginal_cost_observable}
M'_{ij,t} = w_p p_{j,t} + w_d d_{ij,t} + w_q \left( \frac{1}{\alpha_j} \sum_{k=1}^n f_{kj,t}^{\text{real}} + \frac{1}{\alpha_j} f_{ij,t}^{\text{real}} \right) 
\end{equation}
Then, the KKT conditions for the observed equilibrium $\F^{\text{real}}_t$ translate into a set of linear constraints on the unknown parameters $\{b_j\}$ and $\{\nu_i\}$. To select a unique and parsimonious solution, we seek the smallest non-negative biases $\{b_j\}$ that satisfy these constraints by solving the following Linear Program (LP):

\label{pb:bias_init}
\begin{align}
\min_{\{b_j\}, \{\lambda_i\}} & \quad \sum_{j \in \mathcal{S}} b_j \\ 
\text{s.t.} & \quad M'_{ij,t} - b_j = \lambda_i, \quad \forall i \in U, j \in \mathcal{S} \text{ s.t. } f_{ij, t}^{\text{real}} > 0 \label{eq:init_cond1} \\ 
& \quad M'_{ij,t} - b_j \ge \lambda_i, \quad \forall i \in U, j \in \mathcal{S} \text{ s.t. } f_{ij, t}^{\text{real}} = 0 \label{eq:init_cond2} \\ 
& \quad b_j \ge 0, \quad \forall j \in \mathcal{S} \label{eq:init_cond3}
\end{align}
\noindent As formulated, Problem~\ref{pb:bias_init} is a standard LP that can be solved efficiently to obtain the optimal biases $\{b_j^*\}$. 
The initial parameter vector for the calibration is therefore set to $\boldsymbol{\theta}_{\text{init}} = \{w_p=1.0, w_q=1.0, w_d=1.0, \{b_j^*\}\}$.

\begin{algorithm}[H]
\caption{Initialization of Preference Parameters $\{b_j\}$}
\label{alg:init}
\begin{algorithmic}[1]
\REQUIRE Observed flow $\mathbf{F}^{\text{real}}$, prices $\mathbf{P}$, latencies $\mathbf{d}$, capacities $\bm{\alpha}$
\State Set initial weights $w_p \gets 1.0, w_d \gets 1.0, w_q \gets 1.0$ 
\State Calculate observed congestion $Q_j \gets \sum_i f^{\text{real}}_{ij} / \alpha_j$ for all $j$
\State Define LP variables: $b_j$ for each provider $j$, $\lambda_i$ for each user $i$
\State Define objective: $\min \sum_j b_j$
\State Add constraints based on KKT conditions:
\For{each user $i$ and provider $j$}
    \State $C_{ij} \gets w_p p_j + w_d d_{ij} + w_q (Q_j + f^{\text{real}}_{ij} / \alpha_j) - b_j$ 
    \If{$f^{\text{real}}_{ij} > 0$}
        \State Add constraint $C_{ij} = \lambda_i$
    \Else
        \State Add constraint $C_{ij} \ge \lambda_i$
    \EndIf
\EndFor
\State Add constraint $b_j \ge 0$ for all $j$
\State Solve the LP to find optimal $b_j^*$
\State \RETURN $b_j^*$
\end{algorithmic}
\end{algorithm}

\end{document}